\documentclass[prx,reprint,floatfix,letterpaper,twocolumn,nofootinbib,showpacs,longbibliography,accepted=2019-01-18]{quantumarticle}
\usepackage[caption=false]{subfig}
\usepackage{graphicx} 
\usepackage{epstopdf}
\usepackage{array}
\usepackage{verbatim}
\usepackage{amsmath,amsfonts,amssymb,amscd}
\usepackage{amsthm}
\usepackage{tabularx}
\usepackage{stmaryrd}
\usepackage{enumerate}
\usepackage[ruled]{algorithm2e}
\usepackage{enumitem}
\usepackage{wasysym}
\usepackage[dvipsnames]{xcolor}
\usepackage[normalem]{ulem}
\usepackage{bbold}
%
%
%


\usepackage{xy}
\xyoption{matrix}
\xyoption{frame}
\xyoption{arrow}
\xyoption{arc}

\usepackage{ifpdf}
\ifpdf
\else
\PackageWarningNoLine{Qcircuit}{Qcircuit is loading in Postscript mode.  The Xy-pic options ps and dvips will be loaded.  If you wish to use other Postscript drivers for Xy-pic, you must modify the code in Qcircuit.tex}
\xyoption{ps}
\xyoption{dvips}
\fi

\entrymodifiers={!C\entrybox}

\newcommand{\qw}[1][-1]{\ar @{-} [0,#1]}
\newcommand{\qwx}[1][-1]{\ar @{-} [#1,0]}









\newcommand{\control}{*!<0em,.025em>-=-<.2em>{\bullet}}

\newcommand{\ctrl}[1]{\control \qwx[#1] \qw}

\newcommand{\targ}{*+<.02em,.02em>{\xy ="i","i"-<.39em,0em>;"i"+<.39em,0em> **\dir{-}, "i"-<0em,.39em>;"i"+<0em,.39em> **\dir{-},"i"*\xycircle<.4em>{} \endxy} \qw}
\newcommand{\lstick}[1]{*!R!<.5em,0em>=<0em>{#1}}


\newcommand{\Qcircuit}{\xymatrix @*=<0em>}




\usepackage{hyperref}

\theoremstyle{plain}
\newtheorem{thm}{Theorem}
\newtheorem{cor}[thm]{Corollary}
\newtheorem{lem}[thm]{Lemma}

\theoremstyle{definition}

\newcommand{\eq}[1]{(\hyperref[eq:#1]{\ref*{eq:#1}})}

\renewcommand{\sec}[1]{\hyperref[sec:#1]{Section~\ref*{sec:#1}}}
\newcommand{\thrm}[1]{\hyperref[thm:#1]{Theorem~\ref*{thm:#1}}}
\newcommand{\lemm}[1]{\hyperref[lemm:#1]{Lemma~\ref*{lemm:#1}}}
\newcommand{\prop}[1]{\hyperref[prop:#1]{Proposition~\ref*{prop:#1}}}
\newcommand{\corr}[1]{\hyperref[corr:#1]{Corollary~\ref*{corr:#1}}}
\newcommand{\fig}[1]{\hyperref[fig:#1]{Figure~\ref*{fig:#1}}}

\newcommand{\rank}[1]{\text{rank}(#1)}

\DeclareMathAlphabet{\matheu}{U}{eus}{m}{n}

\DeclareMathOperator{\im}{im}


\newcommand{\ket}[1]{|{#1}\rangle}

\newcolumntype{L}[1]{>{\raggedright}p{#1}}
\newcolumntype{C}[1]{>{\centering}p{#1}}
\newcolumntype{R}[1]{>{\raggedleft}p{#1}}
\newcolumntype{D}{>{\centering\arraybackslash}X}

\definecolor{darkgreen}{rgb}{0,0.5,0}
\definecolor{darkblue}{rgb}{0,0,0.5}

\begin{document}
\title{Fault-tolerant gates via homological product codes}
\author{Tomas Jochym-O'Connor}
\email{tjoc@caltech.edu}
\affiliation{\small Walter Burke Institute for Theoretical Physics,
Institute for Quantum Information \& Matter, California Institute of Technology, Pasadena, CA 91125, USA}

\begin{abstract}
A method for the implementation of a universal set of fault-tolerant logical gates is presented using homological product codes. In particular, it is shown that one can fault-tolerantly map between different encoded representations of a given logical state, enabling the application of different classes of transversal gates belonging to the underlying quantum codes. This allows for the circumvention of no-go results pertaining to universal sets of transversal gates and provides a general scheme for fault-tolerant computation while keeping the stabilizer generators of the code sparse.
\end{abstract}

\maketitle

\section{Introduction}

Quantum error correction extends qubit coherence times through error mitigation and will be a requirement for any large-scale quantum computation. In this vein, tremendous research efforts have been placed on finding quantum error correcting codes that may be realizable in both the near and distant future. Among the leading candidates for experimental implementation are 2D~topological stabilizer codes, such as the toric code~\cite{Kitaev03,FMMC12}, which allow for the correction of errors by measuring small-weight local checks while protecting logical information in highly non-local degrees of freedom~\cite{BKM+14,CGM+14,NMM+14,KBF+15,TCM+16}. These codes are experimentally appealing due to their stabilizers being low-weight, thus minimizing the effect of noise during measurement. They can be generalized to higher spatial dimensions, again with the stabilizer generators being relatively low-weight, and provide theoretically simple implementations of different classes of fault-tolerant logic~\cite{Bombin15a, KB15, BK13, PY15, KYP15,JOKY18}. The motivation for quantum error correcting codes to have geometrically local stabilizers (in a given dimension) is to simplify experimental architectures, yet this may not necessarily be a hard requirement. However, the need for low-weight stabilizer checks is much stronger, as larger weight checks lead to more noise propagation, and generally lower threshold error rates. The theory of quantum low-density parity check~(LDPC) codes has been developed to address such concerns and finding good codes with low-weight checks remains a very active area of research~\cite{MMM04,KP12,BrHa14,FH14,TZ14,Hastings17}. Moreover, LDPC constructions can be used to construct codes with very little overhead for fault-tolerant computation, relying on the preparation of logical ancillary states~\cite{Gottesman14,LTZ15,FGL18,FGL18b}. This work will not focus on the development of such codes, but rather will center on how to generally use such codes for the purposes of fault-tolerant computation.

Obtaining a universal set of fault-tolerant gates is complicated by the existence of no-go results for such constructions using only transversal gates~\cite{EK09,ZCC11}. However, many alternative schemes have been developed to circumvent this restriction. They rely on the preparation of special ancillary states and gate teleportation~\cite{Shor96,KLZ96,BK05}, or tailored fault-tolerant constructions for certain classes of codes~\cite{PR13, Bombin15a, JL14, ADP14, Bombin15c, BC15,JB16,JBH16,YTC16}. This work extends the set of fault-tolerant alternatives, presenting a scheme for fault-tolerant logic on any CSS code while keeping the underlying stabilizer measurements low-weight.

We present a method for implementing fault-tolerant logical gates in a homological product code~\cite{BrHa14}. Namely, given the homological product of two quantum codes, we show how to map in a fault-tolerant manner between the encoded homological product logical state to a logical state specified by one of the two codes. Then, if the underlying codes have a set of transversal gates, such logical gates can be applied and the state can be re-encoded back into the full codespace fault-tolerantly. There are no restrictions on the underlying codes, other than having to be defined by a boundary operator~$\delta:\mathcal{C} \rightarrow \mathcal{C}$ such that~$\delta^2 = 0$ in the linear space~$\mathcal{C}$. In particular, by using versions of the 2D and 3D~color codes~\cite{BM07,KB15} as the underlying codes in the construction, a universal set of fault-tolerant operations can be implemented. The mapping between different representations of the code follows a similar construction to Ref.~\cite{JL14}, where one of the two codes is unencoded while the other provides the protection. The key difference between the methods is that the stabilizers of the homological product can remain low-weight, unlike those in a concatenated model. Moreover, we present a decoding method to address errors that may occur during the unencoding of one of the two codes. Given certain properties of the underlying codes, this will result in a finite probability error threshold along the lines of Ref.~\cite{KP13} as well as potential protection against measurement errors.

The article is organized as follows: In Sec.~\ref{sec:HomologicalPC} we review the theory of CSS codes defined by chain complexes and the construction of the homological product codes, carefully considering their underlying structure. In Sec.~\ref{sec:FTlogic} we present the main result, a fault-tolerant method to implement a logical gate using homological product codes as well as discuss a simple decoding procedure. In Sec.~\ref{sec:Universal} we present examples of codes that exhibit a set of universal fault-tolerant gates, expanding on the notions of code padding and doubling. Finally, we conclude with some remarks and open questions in Sec.~\ref{sec:Conclusion}.

\section{Homological Product Codes}
\label{sec:HomologicalPC}
\subsection{Single sector theory}
\label{sec:singleSec}

We begin by reviewing the connection between CSS codes and homology. Namely, as in Ref.~\cite{BrHa14}, we focus on single sector theory\footnote{In general, CSS codes can be constructed from chain complexes where $\delta_i:\mathcal{C}_i \rightarrow \mathcal{C}_{i-1}$ are linear operator satisfying: $\delta_{i-1}\delta_i = 0$.} in~$\mathbb{Z}_2$, that is a chain complex defined by a linear space~$\mathcal{C}$ and a linear boundary operator~$\delta: \mathcal{C} \rightarrow \mathcal{C}$, such that~$\delta^2 = 0$. We can then use such a boundary operator to define a CSS code~\cite{CS96,Steane96b}, that is a stabilizer code whose generators can be expressed as either $X$-type or $Z$-type.

Let $\mathcal{C}$ be a $n$-dimensional binary space~$\mathbb{Z}_2^n$, then $\delta$ will be a $n \times n$ binary matrix. The (perhaps over-complete) generating set of $X$~stabilizers will be given by the rows of~$\delta$, that is for a given row, a generator~$S_{X_i}$ will have $X$~support on the qubits with a~1 in the given row. Similarly, the $Z$~stabilizers will be defined by the columns of the matrix. Given $\delta^2=0$, we are thus assured commutativity of the stabilizers. The number of independent generators of both type will be~$\rank{\delta}$, and as such the number of logical qubits of the code will be~$k = n - 2 \rank{\delta}$.

For the remainder of this section we shall focus on the reverse implication. That is, given a CSS code whose $X$ and~$Z$ stabilizer spaces are of the same dimension, one can construct a boundary operator~$\delta$ of a single-sector theory. The results presented are a fairly straightforward corollary of Lemma~3 from Ref.~\cite{BrHa14}, yet we include them here for completeness and to review some important concepts, namely the canonical boundary operator.

\begin{lem}
\label{lem:canonical}
Given a CSS code on $n$ qubits, whose $X$ and $Z$~stabilizer spaces are each of cardinality~$2^l$, therefore encoding~$k = n-2l$ qubits. Then, there exists a invertible matrix~$W$, and canonical boundary operator~$\delta_0$ defined as:
\begin{align}
\delta_0 = \begin{pmatrix}
0_k & 0 & 0 \\
0 & 0_l & \mathbb{1}_l \\
0 & 0_l & 0_l
\end{pmatrix},
\end{align}
such that $\delta  = W \delta_0 W^{-1}$, where the rows~(columns) of~$\delta$ contain a set of generators of the $X$~($Z$)~stabilizer group.
\end{lem}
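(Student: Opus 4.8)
The goal is to show that any CSS code with equal-dimension $X$ and $Z$ stabilizer spaces arises from a single boundary operator $\delta$ that is conjugate to the canonical form $\delta_0$. The strategy is to build $\delta$ explicitly from a well-chosen basis of $\mathbb{Z}_2^n$ adapted to the stabilizer structure, then recognize that a change of basis puts it in the block form of $\delta_0$.

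First I would set up the linear-algebraic data. Let $\mathcal{S}_X, \mathcal{S}_Z \subseteq \mathbb{Z}_2^n$ be the $X$- and $Z$-stabilizer spaces (as binary vectors recording supports), each of dimension $l$. CSS commutativity says every element of $\mathcal{S}_X$ is orthogonal (under the standard bilinear form on $\mathbb{Z}_2^n$) to every element of $\mathcal{S}_Z$, i.e.\ $\mathcal{S}_Z \subseteq \mathcal{S}_X^\perp$. The plan is to choose a basis $\{e_1,\dots,e_n\}$ of $\mathbb{Z}_2^n$ in three groups of sizes $k, l, l$: the middle group a basis of $\mathcal{S}_X$, the last group vectors that pair with them to witness the nondegeneracy of the symplectic structure (dual partners to the $X$-stabilizers), and the first group spanning a complementary "logical" subspace. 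One must check such a basis exists --- this uses that $\mathcal{S}_X \cap \mathcal{S}_Z$ can be taken trivial (or handled by a preliminary reduction) and that the symplectic pairing between $\mathcal{S}_X$ and a transversal of $\mathcal{S}_Z$ is nondegenerate; this bookkeeping is exactly the content borrowed from Lemma~3 of Ref.~\cite{BrHa14}.

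Next I would define $\delta$ on this basis by the rule dictated by $\delta_0$: kill the first $k$ basis vectors, send each of the last $l$ basis vectors to the corresponding middle ($X$-stabilizer) basis vector, and kill the middle ones. Then $\delta^2 = 0$ is immediate from the block structure, $\mathrm{rank}(\delta)=l$, and by construction the row space of the matrix of $\delta$ (in the standard basis) is the $X$-stabilizer space while the column space is the $Z$-stabilizer space --- here I would verify that the "column space equals $\mathcal{S}_Z$" claim is consistent with $\mathcal{S}_Z \subseteq \mathcal{S}_X^\perp$ and the dimension count $k = n-2l$, which forces $\mathcal{S}_Z = \mathcal{S}_X^\perp \cap (\text{appropriate complement})$. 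Letting $W$ be the change-of-basis matrix from the standard basis to $\{e_i\}$ (invertible since $\{e_i\}$ is a basis), we get $\delta = W\delta_0 W^{-1}$ directly.

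The main obstacle I anticipate is not the existence of $\delta$ per se but pinning down the correspondence so that the \emph{columns} of $\delta$ give precisely the $Z$-stabilizers (not merely some space of the right dimension orthogonal to $\mathcal{S}_X$). Because $\delta_0$ is highly asymmetric between rows and columns, the freedom in choosing the dual-partner vectors $e_{k+l+1},\dots,e_n$ must be spent to align the column space with $\mathcal{S}_Z$; showing this is always possible amounts to showing that $\mathcal{S}_X^\perp = \mathcal{S}_Z \oplus (\text{span of logical and stabilizer directions})$ in a way compatible with the chosen basis, i.e.\ that one can simultaneously normalize both the symplectic pairing and the location of $\mathcal{S}_Z$. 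I would handle this by first fixing the basis of $\mathcal{S}_X$ and of $\mathcal{S}_Z$, extending compatibly, and only then choosing the dual partners via a Gram--Schmidt-style symplectic orthogonalization; once the two stabilizer spaces are in standard position the rest is the routine verification sketched above.
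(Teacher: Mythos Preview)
Your route is genuinely different from the paper's. The paper does not build an adapted basis at all: it invokes the Gottesman--Knill theorem to obtain a CNOT encoding circuit $U$ taking $\ket{\psi}\otimes\ket{0}^{\otimes l}\otimes\ket{+}^{\otimes l}$ to the codespace, represents each CNOT by the elementary matrix $W_{i,j}=\mathbb{1}+w_iw_j^T$, and sets $W$ to be their product. Because conjugation by $W_{i,j}$ is precisely the Heisenberg update of $X$- and $Z$-type Paulis under a CNOT, the rows and columns of $W\delta_0W^{-1}$ are \emph{automatically} the evolved $X$- and $Z$-stabilizers---no separate row/column bookkeeping is needed. The payoff is that $W$ comes with the physical interpretation ``encoding circuit,'' which the paper uses repeatedly downstream (Corollary~\ref{cor:sparse}, the partial-decoding argument in Section~\ref{sec:partial}).

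Your linear-algebraic construction can be made to work, but as written it has a swap and a hidden condition. First, the image of $\delta=W\delta_0W^{-1}$ is $\mathrm{span}(e_{k+1},\dots,e_{k+l})$, i.e.\ the \emph{column} space; since the statement asks columns to generate the $Z$-stabilizers, the middle block must be a basis of $\mathcal{S}_Z$, not $\mathcal{S}_X$. Second, the \emph{row} space of $\delta$ is not spanned by your $e_{k+l+i}$ but by the dual vectors $f_{k+l+1},\dots,f_n$ (the last $l$ rows of $W^{-1}$). Forcing this to equal $\mathcal{S}_X$ is exactly the obstacle you flag; the clean way through is to choose $e_1,\dots,e_{k+l}$ to span all of $\mathcal{S}_X^{\perp}$ (possible since $\mathcal{S}_Z\subseteq\mathcal{S}_X^{\perp}$ by CSS and $\dim\mathcal{S}_X^{\perp}=k+l$), whence $f_{k+l+1},\dots,f_n$ automatically span $(\mathcal{S}_X^{\perp})^{\perp}=\mathcal{S}_X$. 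With that fix your argument is complete and more elementary than the paper's (no appeal to encoding circuits), at the cost of losing the built-in ``$W$ is a CNOT circuit'' interpretation---which you would have to recover afterwards by factoring $W\in GL_n(\mathbb{Z}_2)$ into elementary matrices if you want to use it as the paper does.
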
 

\begin{proof}
Given the existence of a CSS code, by the Gottesman-Knill theorem~\cite{Gottesman97} there exists a unitary operator~$U$ composed solely of CNOT gates that maps~$\ket{\psi} \otimes \ket{0}^{\otimes l} \otimes \ket{+}^{\otimes l}$ to the encoded stabilizer code, where~$l = (n-k)/2$ and $\ket{\psi}$ is a $k$-qubit state. This statement can be expressed in terms of matrix manipulation on~$\mathbb{Z}_2$, where $\delta_0$~will represent the initial state of the stabilizers before the application of the encoding circuit, that is the rows of~$\delta_0$ represent the initial~$\ket{+}$ states, and the columns the initial~$\ket{0}$ states. A CNOT gate with qubit~$i$ as control, and qubit~$j$ as target can then be expressed according to the invertible matrix:
\begin{align}
W_{i,j} = \mathbb{1} + w_iw_j^{T},
\end{align}
where $w_k$ is the standard basis column vector one non-zero entry at position~$k$. The action of~$W_{i,j}$ by conjugation maps column~$c_j$ to the sum of columns~$c_i \oplus c_j$, and row $r_i$ to the sum of rows~$r_i \oplus r_j$. This is the exact action required from a CNOT, as it maps~$X_i$ to $X_iX_j$ and $Z_j$ to $Z_iZ_j$. Note, as required for a valid representation of a CNOT gate, $W_{i,j}^2 = \mathbb{1} \Rightarrow W_{i,j} = W_{i,j}^{-1}$, where again we are working modulo 2. Then, the encoding operation~$W$ can be broken into its CNOT components and can be expressed as~$W = W_{i_N,j_N}\cdots W_{i_2,j_2}W_{i_1,j_1}$. As such,
\begin{align}
\delta &= (W_{i_N,j_N}\cdots W_{i_1,j_1}) \delta_0 (W_{i_N,j_N}\cdots W_{i_1,j_1})^{-1}\\
&= (W_{i_N,j_N}\cdots W_{i_1,j_1}) \delta_0 (W_{i_1,j_1} \cdots W_{i_N,j_N}),
\end{align}
will be a valid representation of the stabilizers of the code. Since $W$~is a valid representation of the encoding circuit of the code, the rows (columns) of $\delta$ will remain a valid representation of the $X$ ($Z$)~stabilizers since they were so for~$\delta_0$.
\end{proof}

Perhaps as importantly for the purposes of this article, if the given CSS code has a generating set of stabilizers that are sparse, then the resulting constructed~$\delta$ will be sparse. We define the \emph{sparsity} of a code to be the smallest integer~$w$ such that there exists a set of generators of the code whose weights are at most~$w$ while any given qubit participates in at most~$w$ stabilizer checks.

\begin{cor}
\label{cor:sparse}
Given a CSS code on $n$ qubits with an equal number of $X$ and $Z$~stabilizers and sparsity~$t$, then a boundary operator~$\delta$ can be constructed such that no row or column will have more than~$t^2$ non-zero entries.
\end{cor}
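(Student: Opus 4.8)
The plan is to write an explicit $\delta$ straight from a sparse check presentation of the code and then verify that it satisfies everything demanded of a boundary operator in \lemm{canonical}. First I would extract, from the sparse generating set witnessing sparsity $t$ (weights $\le t$, every qubit in $\le t$ checks), \emph{minimal} generating sets of the $X$- and $Z$-stabilizer groups; deleting redundant generators only shrinks weights and qubit-participation numbers, so these $l$-element sets are still $t$-sparse (here $l$ is the common dimension of the two stabilizer spaces, $k=n-2l$). Arranging them as the rows of full-rank matrices $H_X,H_Z\in\mathbb{Z}_2^{l\times n}$, the CSS commutation relation reads $H_XH_Z^{T}=0$, and I would set
\begin{align}
\delta := H_Z^{T}H_X .
\end{align}
Then $\delta^2=H_Z^{T}(H_XH_Z^{T})H_X=0$; since $H_Z^{T}$ has full column rank, the row space of $\delta$ equals that of $H_X$, so the rows of $\delta$ generate the $X$-stabilizer group and, symmetrically, the columns generate the $Z$-stabilizer group; moreover $\ker\delta=\ker H_X$ while $\im\delta$ is the $Z$-stabilizer space, so $\ker\delta/\im\delta$ has the correct dimension $k$. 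To see that this $\delta$ sits in the canonical picture, note that $\delta$ and $\delta_0$ are $n\times n$ matrices over $\mathbb{Z}_2$ of the same rank $l$ and with zero square, hence have the same Jordan structure ($l$ blocks of size two and $k$ of size one) and are conjugate by some $W\in GL_n(\mathbb{Z}_2)$; since $W$ factors into the elementary CNOT matrices $\mathbb{1}+w_iw_j^{T}$ used in the proof of \lemm{canonical}, we have $\delta=W\delta_0W^{-1}$ as required.

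The weight bound is then immediate. Row $i$ of $\delta=H_Z^{T}H_X$ equals the $\mathbb{Z}_2$-sum of those rows of $H_X$ whose index labels a $Z$-generator supported on qubit $i$; there are at most $t$ such generators, each a vector of weight at most $t$, so row $i$ has weight at most $t^2$. Exchanging the roles of $H_X$ and $H_Z$ bounds every column of $\delta$ by $t^2$ in exactly the same way, which is the assertion.

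The step needing care is the middle one, and it is bookkeeping rather than calculation: confirming that the bare product $H_Z^{T}H_X$ really is ``a boundary operator for the code'' in the intended sense — rows and columns generating the respective stabilizer groups, homology matching the logical space, and the matrix being of the conjugation form $W\delta_0W^{-1}$ supplied by \lemm{canonical}. A minor point worth spelling out along the way is the reduction from a possibly overcomplete or size-mismatched sparse generating set to the equal-size full-rank pair $H_X,H_Z$, which is precisely what makes the square matrix $H_Z^{T}H_X$ well defined.
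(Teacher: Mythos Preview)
Your proposal is correct and in fact constructs the \emph{same} matrix as the paper. The paper chooses an encoding circuit $W$ with $W e_{k+i}=s_{Z_i}$ and $e_{k+l+i}^{T}W^{-1}=s_{X_i}$, and a short computation shows that row $j$ of its $\delta=W\delta_0W^{-1}$ is $\sum_i (s_{Z_i})_j\, s_{X_i}$, i.e.\ exactly row $j$ of your $H_Z^{T}H_X$; the $t^2$ counting is then identical in both arguments. The genuine difference is presentational: you write the boundary operator down in closed form and verify $\delta^2=0$ and the row/column spaces directly, whereas the paper reaches the same object through the encoding-circuit language of \lemm{canonical}. Your route is a bit more transparent and makes the sparsity bound immediate; the paper's route has the advantage that the conjugacy $\delta=W\delta_0W^{-1}$ and the interpretation of $W$ as a CNOT circuit come for free, so your separate Jordan-form step (which is correct but not actually needed for the corollary) is unnecessary there.
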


\begin{proof}
Given some sparse representative set of stabilizers~$\{S_{X_i}, S_{Z_i} \}$, as in the proof of Lemma~\ref{lem:canonical}, a unitary circuit~$W$ can be chosen that maps $Z_{k+i} \rightarrow S_{Z_i}$ and $X_{k+l+i} \rightarrow S_{X_i}$. Consider the right action of $W^{-1}$ in terms of its action on the canonical boundary operator:
\begin{align}
\delta_0W^{-1} = \begin{pmatrix}
0_{k \times n} \\
s_{X_1} \\
\vdots \\
s_{X_l} \\
0_{l \times n} 
\end{pmatrix},
\end{align}
where on the right side of the equality we have a matrix whose rows are either all-zero or a binary representation~$s_{X_i}$ of the stabilizer~$S_{X_i}$. This follows from the fact that the right application of~$W^{-1}$ results in the propagation of the initial~$X$ stabilizers to their final generator form. Then, by the sparsity of the stabilizer generators, each row will be of weight at most~$t$ and each column will have weight at most~$t$. Now consider the left application of $W$ applied to~$\delta_0W^{-1}$, thus completing the conjugation, the resulting matrix $W\delta_0 W^{-1}$ will have rows that will be sums of the different rows of~$\delta_0W^{-1}$. Moreover, each row of~$W\delta_0 W^{-1}$ will be a sum of at most~$t$ rows~$s_{X_i}$, and will as such be of weight at most~$t^2$. Finally, since a given row can map to at most~$t$ other rows, each non-zero entry within a column of~$\delta_0W^{-1}$ can map to at most $t$~entries within that column. Therefore, since there were at most~$w$ non-zeros entries in a column of~$\delta_0W^{-1}$, there can be at most~$t^2$ non-zeros in each column of~$W\delta_0 W^{-1}$
\end{proof}

\subsection{Homological Product Construction}
\label{sec:HomologialProd}

Given two complexes~$(\mathcal{C}_1,\delta_1)$, $(\mathcal{C}_2,\delta_2)$ with their associated spaces and single sector boundary operators, we define a new operator as in Ref.~\cite{BrHa14},
\begin{align}
\partial = \delta_1 \otimes \mathbb{1} + \mathbb{1} \otimes \delta_2,
\end{align}
acting on~$\mathcal{C}_1 \otimes \mathcal{C}_2$. It follows from~$\delta_i^2=0$ that~$\partial^2 = 0$, again since we are working in~$\mathbb{Z}_2$. Therefore, $(\mathcal{C}_1 \otimes \mathcal{C}_2, \partial)$ is a valid single sector complex, defining its own quantum CSS code. 

We now restate some important properties of the homological product.
\begin{lem}[\cite{BrHa14}]
\label{lem:kernel}
Let $(\mathcal{C}_1,\delta_1)$, $(\mathcal{C}_2,\delta_2)$ be complexes defining codes with $k_1$, and $k_2$~logical operators, respectively. Let $\partial = \delta_1 \otimes \mathbb{1} + \mathbb{1} \otimes \delta_2$, then the resulting complex~$(\mathcal{C}_1 \otimes \mathcal{C}_2, \partial)$ will encode~$k = k_1k_2$ logical qubits and
\begin{align}
\ker{\partial} = \ker{\delta_1} \otimes \ker{\delta_2} + \im{\partial}.
\end{align}
\end{lem}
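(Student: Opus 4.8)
The plan is to verify the two claims of the lemma separately, relying on the tensor-product structure of $\partial$ over $\mathbb{Z}_2$. First I would establish the homology identity. The inclusion $\ker \delta_1 \otimes \ker \delta_2 + \im \partial \subseteq \ker \partial$ is immediate: if $x \in \ker\delta_1$ and $y \in \ker\delta_2$ then $\partial(x\otimes y) = (\delta_1 x)\otimes y + x \otimes (\delta_2 y) = 0$, and $\partial$ annihilates $\im\partial$ because $\partial^2 = 0$. For the reverse inclusion I would choose adapted bases. By $\delta_i^2 = 0$ we can decompose $\mathcal{C}_i = A_i \oplus B_i \oplus H_i$ where $\delta_i$ maps $A_i$ isomorphically onto $B_i = \im\delta_i$, kills $B_i$ and $H_i$, and $H_i$ is a chosen complement to $B_i$ inside $\ker\delta_i$; thus $\ker\delta_i = B_i \oplus H_i$. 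Tensoring gives a nine-block decomposition of $\mathcal{C}_1\otimes\mathcal{C}_2$, on which $\partial$ acts blockwise. A direct check on each block shows $\ker\partial$ is spanned by $H_1\otimes H_2$ together with vectors that lie in $\im\partial$ (this is essentially the algebraic K\"unneth computation in characteristic $2$, where no $\mathrm{Tor}$ term appears because we are over a field). Since $H_1 \otimes H_2 \subseteq \ker\delta_1 \otimes \ker\delta_2$, this yields $\ker\partial \subseteq \ker\delta_1\otimes\ker\delta_2 + \im\partial$, and combined with the easy inclusion gives equality.

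For the dimension count, I would pass to the induced map on homology. Write $H(\mathcal{C}_i,\delta_i) = \ker\delta_i/\im\delta_i$, which has dimension $k_i$ by definition of the codes (the logical qubit count equals $\dim\ker\delta_i - \dim\im\delta_i$). The identity just proved shows that the natural map $H(\mathcal{C}_1,\delta_1)\otimes H(\mathcal{C}_2,\delta_2) \to H(\mathcal{C}_1\otimes\mathcal{C}_2,\partial)$ is surjective; the adapted-basis analysis also shows it is injective, so it is an isomorphism and $k = \dim H(\mathcal{C}_1\otimes\mathcal{C}_2,\partial) = k_1 k_2$. Alternatively one can argue purely by rank counting: with $r_i = \rank\delta_i$ and $n_i = \dim\mathcal{C}_i$, one computes $\rank\partial$ from the block form (it equals $r_1 n_2 + r_2 n_1 - 2 r_1 r_2$, the cross terms doubling up and cancelling mod nothing — here one must be careful that the relevant blocks compose to genuine isomorphisms), and then $k = n_1 n_2 - 2\rank\partial = (n_1 - 2r_1)(n_2 - 2r_2) = k_1 k_2$.

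The main obstacle is bookkeeping in the blockwise analysis of $\partial$ on the nine summands $X_i \otimes Y_j$ with $X,Y \in \{A,B,H\}$: one has to track how the two commuting differentials $\delta_1\otimes\mathbb{1}$ and $\mathbb{1}\otimes\delta_2$ interact, identify precisely which combinations of blocks lie in the image, and confirm that the characteristic-$2$ arithmetic (where $+$ and $-$ coincide) does not secretly merge or split any subspace in a way that changes the rank. Concretely, the delicate point is showing that a cycle with a nonzero component in, say, $A_1\otimes B_2$ must have that component cancelled by the boundary of something in $A_1\otimes A_2$, which forces the remaining components into $\ker\delta_1\otimes\ker\delta_2$; this is where the isomorphism $\delta_i\colon A_i \xrightarrow{\sim} B_i$ is used essentially. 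Once the block structure is laid out cleanly, both the homology identity and the factorization $k = k_1 k_2$ fall out together, so I would organize the proof around a single careful decomposition rather than proving the two statements independently.
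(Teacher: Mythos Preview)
The paper does not supply its own proof of this lemma: it is quoted directly from~\cite{BrHa14} and stated without argument. Your proposal is correct, and in fact your adapted-basis decomposition $\mathcal{C}_i = A_i \oplus B_i \oplus H_i$ is precisely the canonical boundary operator of Lemma~\ref{lem:canonical} in coordinate-free language (the blocks $H_i$, $B_i$, $A_i$ correspond to the three diagonal blocks of $\delta_{i,0}$). The nine-block analysis you outline, together with the rank formula $\rank\partial = r_1 n_2 + r_2 n_1 - 2 r_1 r_2$, is exactly the computation carried out in~\cite{BrHa14} via the explicit matrix form of $\partial_0$; so your approach coincides with the cited reference, merely phrased more homologically (as a K\"unneth argument) rather than via explicit matrices.
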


Suppose that $w_a$ is the sparsity of~$\delta_a$. Moreover, let $d_a^X, d_a^Z$ be the $X$ and $Z$ distances for the corresponding codes. Then, the weight and distances of the new code can be bounded according to the parameters of the original code.

\begin{lem}[\cite{BrHa14}]
\label{lem:distance}
The sparsity of $\partial$ is upper bounded by~$w_1 + w_2$. The $X$ and $Z$~distance of the new code satisfy the following bounds:
\begin{align}
\max{\{d_1^\alpha,d_2^\alpha\}} \le d^\alpha \le d_1^{\alpha}d_2^{\alpha}, \qquad \alpha = X,Z.
\end{align}
\end{lem}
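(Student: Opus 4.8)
The plan is to prove the two assertions separately, since the sparsity bound and the distance bounds require different tools. For the sparsity claim, I would work directly from the definition $\partial = \delta_1 \otimes \mathbb{1} + \mathbb{1} \otimes \delta_2$ acting on $\mathcal{C}_1 \otimes \mathcal{C}_2$. A qubit of the product code is a pair $(i,j)$ with $i$ a qubit of code~1 and $j$ a qubit of code~2. The $X$-stabilizer indexed by row $(i,j)$ of $\partial$ has support $\{(i',j) : (\delta_1)_{i,i'} = 1\} \cup \{(i,j') : (\delta_2)_{j,j'} = 1\}$, and since by definition of sparsity the first set has size at most~$w_1$ and the second at most~$w_2$, the row weight is at most $w_1 + w_2$. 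The symmetric argument on columns, together with the observation that $(i,j)$ participates in a row of $\partial$ only if $i$ participates in a row of $\delta_1$ or $j$ in a row of $\delta_2$, gives the participation bound. Hence the sparsity of $\partial$ is at most $w_1 + w_2$.

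For the distance bounds I would use the homological description of the code: $X$-logical operators are representatives of $\ker \partial / \im \partial$ on the appropriate side, and similarly for $Z$ (using $\partial^T$), with the distance being the minimum Hamming weight over a nontrivial coset. For the upper bound $d^\alpha \le d_1^\alpha d_2^\alpha$, I would take a minimum-weight nontrivial logical operator $v_1$ of code~1 (weight $d_1^\alpha$) and $v_2$ of code~2 (weight $d_2^\alpha$) and check that $v_1 \otimes v_2$ is a nontrivial logical operator of the product code: it lies in $\ker$ because $\partial(v_1 \otimes v_2) = (\delta_1 v_1)\otimes v_2 + v_1 \otimes (\delta_2 v_2) = 0$, and it is not in $\im \partial$ by the Künneth-type decomposition of Lemma~\ref{lem:kernel}, since $v_1 \in \ker\delta_1 \setminus \im\delta_1$ and $v_2 \in \ker\delta_2\setminus\im\delta_2$ together force $v_1\otimes v_2$ to have nonzero image in $H \otimes H$ under the projection. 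Its weight is $|v_1|\,|v_2| = d_1^\alpha d_2^\alpha$, giving the upper bound.

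For the lower bound $d^\alpha \ge \max\{d_1^\alpha, d_2^\alpha\}$, I would argue contrapositively: suppose the product code had a nontrivial logical operator $w$ of weight strictly less than $d_1^\alpha$ (say). Viewing $w \in \mathcal{C}_1 \otimes \mathcal{C}_2$ as a matrix indexed by (qubit of code~1)$\times$(qubit of code~2), the hypothesis means every column has fewer than $d_1^\alpha$ nonzero entries; I want to derive that $w$ must then be a stabilizer, i.e. $w \in \im\partial$, contradicting nontriviality. The cleanest route is to use the decomposition $\ker\partial = \ker\delta_1 \otimes \ker\delta_2 + \im\partial$ from Lemma~\ref{lem:kernel}: modulo $\im\partial$ we may assume $w \in \ker\delta_1 \otimes \ker\delta_2$, and then each "column" of $w$ (a vector in $\ker\delta_1$) that is nonzero but has weight below $d_1^\alpha$ must already lie in $\im\delta_1$ — but since $w$ as a whole lies in $\ker\delta_1 \otimes \ker\delta_2$ and we have quotiented by $\im\partial \supseteq \im\delta_1 \otimes \ker\delta_2$, such columns can be absorbed, forcing $w \equiv 0$. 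The main obstacle I anticipate is making this column-by-column reduction fully rigorous: one has to be careful that subtracting off the $\im\delta_1$ part of individual columns can be realized as subtracting an element of $\im\partial$ (not merely $\im\delta_1\otimes\mathrm{id}$), and that the argument is symmetric so it applies to both $\delta_1$ and $\delta_2$ and to both the $X$ and $Z$ sectors (the latter by passing to transposes). Getting the bookkeeping of the quotient right — rather than the algebra itself — is where the real care is needed.
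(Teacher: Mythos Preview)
The paper does not prove this lemma itself --- it is quoted from~\cite{BrHa14} --- but it does re-prove the lower-bound half as Theorem~\ref{thm:errorbands} via error bands and conjugation by the encoder. Your sparsity and upper-bound arguments are correct and standard. The lower bound, however, has a real gap at the step ``modulo $\im\partial$ we may assume $w \in \ker\delta_1 \otimes \ker\delta_2$.'' That reduction adds an element of $\im\partial$ to $w$ and in general changes its Hamming weight arbitrarily, so afterwards you no longer know that each column has weight below $d_1^\alpha$; the subsequent claim that every nonzero column lies in $\im\delta_1$ is therefore unsupported. The obstacle you flagged (assembling per-column $\im\delta_1$ corrections into a single element of $\im\partial$) is downstream of this more basic failure and you never reach it. Lemma~\ref{lem:kernel} is the wrong tool here: it pins down the homology class of $w$ but gives you no weight-controlled representative.

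The fix, which is exactly the paper's error-band argument in the proof of Theorem~\ref{thm:errorbands}, is to act by the invertible encoder rather than reduce modulo boundaries. Applying $W_1^{-1}\otimes\mathbb{1}$ sends each column $c_j\mapsto W_1^{-1}c_j$ and hence preserves the \emph{set of nonzero columns}, whose cardinality is at most $|w|$. The image is a nontrivial logical operator for $\partial_{1,0}=\delta_{1,0}\otimes\mathbb{1}+\mathbb{1}\otimes\delta_2$, whose logical content lives in the first $k_1$ rows as copies of $\mathcal{C}_2$; some such row must therefore carry a nontrivial $\mathcal{C}_2$-logical, forcing the number of nonzero columns --- and thus $|w|$ --- to be at least $d_2^\alpha$. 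Conjugating instead by $\mathbb{1}\otimes W_2^{-1}$ yields $|w|\ge d_1^\alpha$.
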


\subsection{Encoding the homological product code}

In this subsection, we review some facts about the encoding circuit for~$\partial$~\cite{BrHa14}. As eluded to in Section~\ref{sec:singleSec}, there are invertible matrices $W_a$ such that~$\delta_a = W_a \delta_{a,0} W_a^{-1}$, where~$\delta_{a,0}$ are the canonical boundary operators for~$\delta_a$. The matrices~$W_a$ are binary representatives of the encoding circuit for the given code, and as such, by taking their tensor product we obtain the encoding operation for~$\partial$. That is:
\begin{align}
\partial &= (W_1 \otimes W_2) (\delta_{1,0} \otimes \mathbb{1} + \mathbb{1} \otimes \delta_{2,0}) (W_1^{-1} \otimes W_2^{-1}) \\
&:= (W_1 \otimes W_2) \partial_0 (W_1 \otimes W_2)^{-1},
\end{align}
where we have defined~$\partial_0$ to be the canonical boundary operator for~$\partial$. We can express~$\partial_0$ in matrix form as follows: 
\begin{align}
\partial_0 &= (\delta_{1,0}\otimes\mathbb{1}+\mathbb{1}\otimes\delta_{2,0}) \\
&=\begin{pmatrix}
\mathbb{1}_{k_1} \otimes \delta_{2,0} & 0 & 0 \\
0 & \mathbb{1}_{l_1}\otimes \delta_{2,0} & \mathbb{1}_{l_1} \otimes \mathbb{1}_{n_2} \\
0 & 0 & \mathbb{1}_{l_1}\otimes \delta_{2,0}
\end{pmatrix},
\label{eq:delta0matrix}
\end{align}
where $k_i$ are the number of logical qubits and $l_i = (n_i - k_i)/2$ is the number of $X/Z$ stabilizers of the given code code. 

\begin{figure}[h]
\begin{center}
\includegraphics[width = \linewidth]{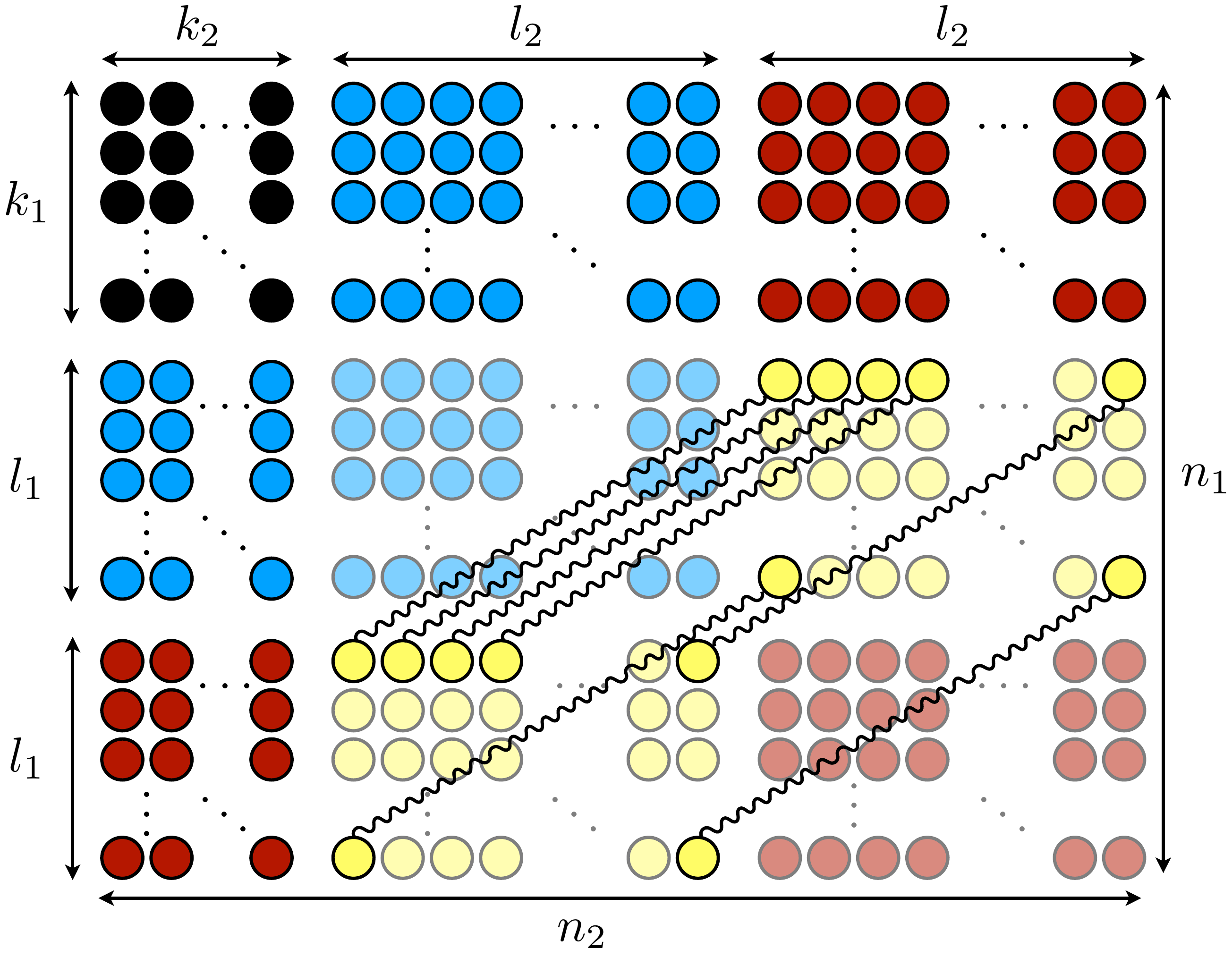}
\caption{Initial state of the homological product code prior to encoding~\cite{BrHa14}. Each circle represents a qubit, with the black qubits representing those holding the logical information to be encoded. Blue qubits are prepared in the~$\ket{0}$ state, while red qubits are prepared in~$\ket{+}$. The yellow qubits joined by an oscillating edge are prepared in a Bell pair~$(\ket{00}+\ket{11})/\sqrt{2}$.}
\label{fig:init}
\end{center}
\end{figure}

It is worth further exploring the form of~$\partial_0$, as this will be informative of how the logical information is encoded in the code. Each row and column of~$\partial_0$ will be of weight at most 2. Moreover, if a given row has 2 non-zeros entries, say at positions $q_i$ and $q_j$, then any column with a non-zero entry at~$q_i$ will also have a non-zero entry at position~$q_j$ in order to satisfy commutativity. As such, these rows and columns represent an initial entangled Bell pair between qubits~$q_i$ and~$q_j$ since they will be stabilized by the operators~$X_{q_i}X_{q_j}$ and~$Z_{q_i}Z_{q_j}$.

\begin{figure}
\subfloat[$\mathcal{C}_1$ encoder]{
\begin{minipage}[c][1\width]{0.5\linewidth}%
\includegraphics[clip,width=0.9\textwidth]{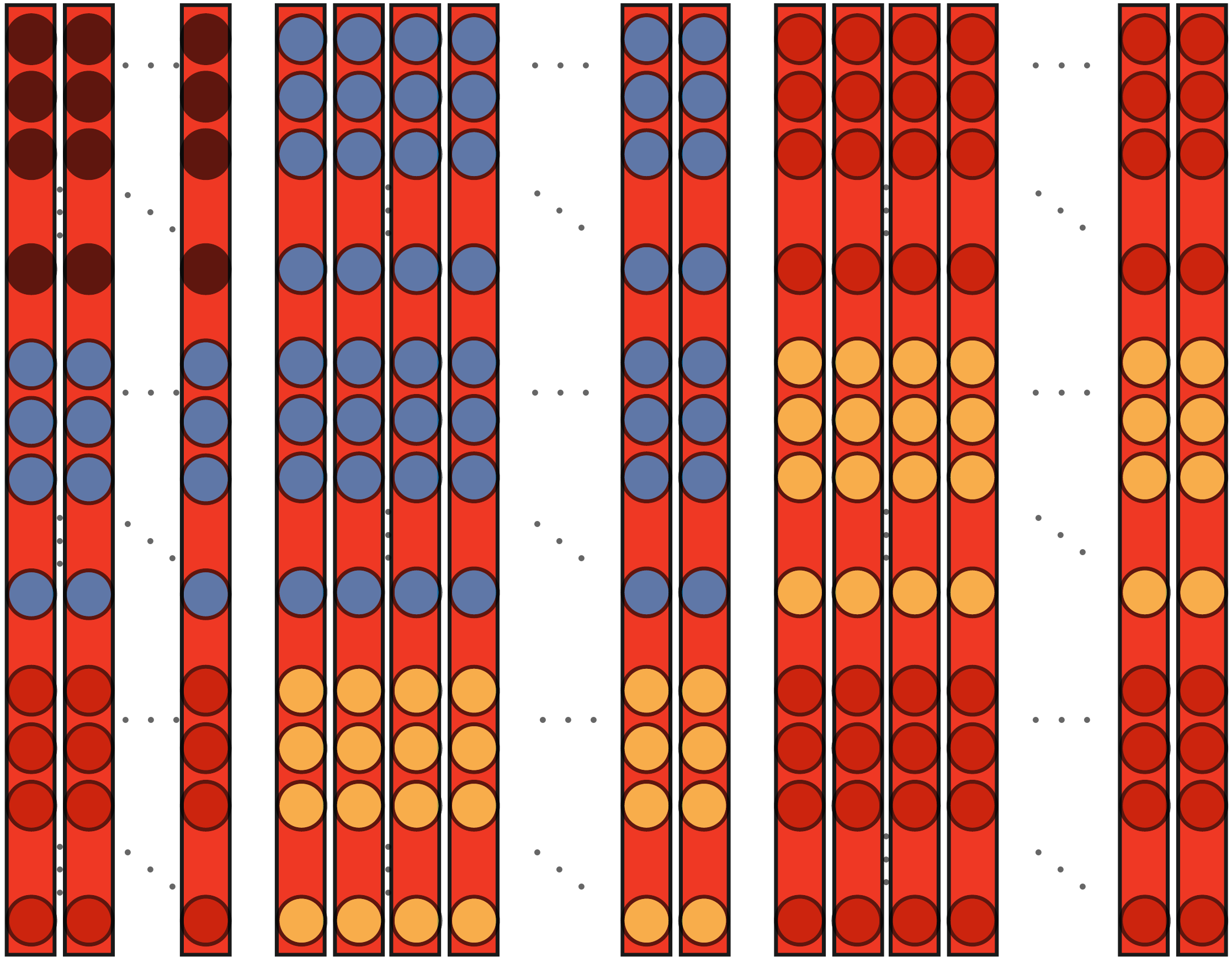}%
\label{fig:Uenc1}
\end{minipage}}
\subfloat[$\mathcal{C}_2$ encoder]{\centering{}%
\begin{minipage}[c][1\width]{0.5\linewidth}%
\begin{center}
\includegraphics[clip,width=0.9\textwidth]{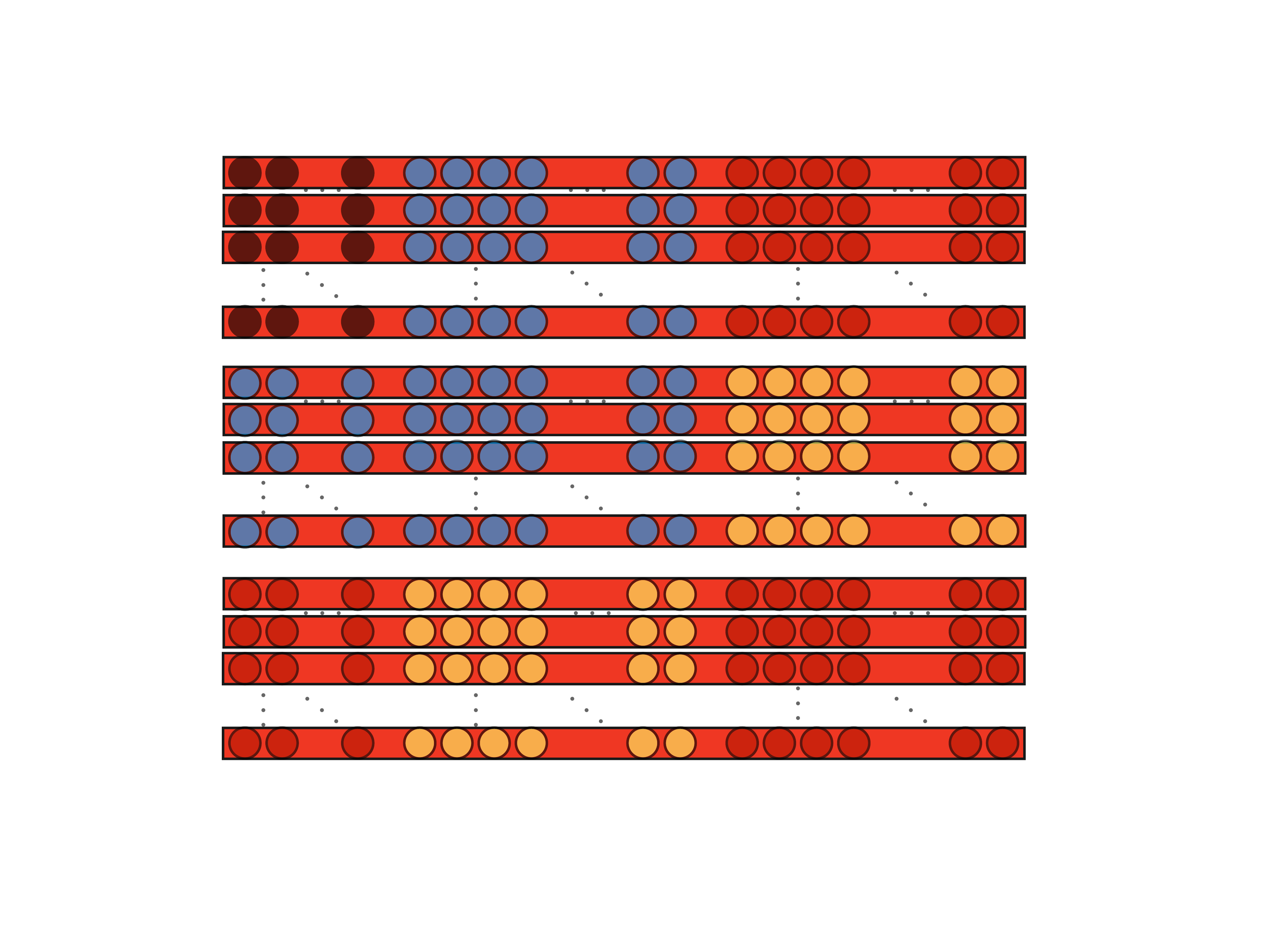}
\label{fig:Uenc2}
\par\end{center}%
\end{minipage}}
\caption{Schematic representation of the support of the encoding circuits for each of the codes underlying the homological product code, defined by the boundary operator~$\partial = \delta_1\otimes \mathbb{1} + \mathbb{1} \otimes \delta_2$. The overall encoding circuit has a binary representation of the operator~$W_1 \otimes W_2$, which acts on the initial state represented in Fig.~\ref{fig:init}. That is, the physical encoding unitary $U_1$ for the code~$\mathcal{C}_1$ will act on every column on qubits from Fig.~\ref{fig:init}, and conversely the physical encoding unitary~$U_2$ will act on every row.}
\label{fig:Uenc}
\end{figure}

The initial state can be pictorially represented by Fig.~\ref{fig:init}, where along a fixed row and column, the states in~$\mathcal{C}_1$ and~$\mathcal{C}_2$ are fixed, respectively. Then, the tensor product binary operators~$W_1 \otimes \mathbb{1}$ and~$\mathbb{1} \otimes W_2$ will have geometric meaning in this picture. Note for the remainder of this work, we will denote~$U_i$ as the physical encoding unitaries composed of CNOT gates acting on the quantum states whose binary representation is given by~$W_i$. Thus $U_1$ will couple qubits within vertical bands, while $U_2$ will couple qubits within horizontal bands, as represented in Fig.~\ref{fig:Uenc}.

\section{Fault-tolerant logical gates}
\label{sec:FTlogic}
\subsection{Partial decoding of the homological product code}
\label{sec:partial}

The key idea for expanding the set of available fault-tolerant logical gates will be for the two underlying codes composing the homological product to have different complimentary sets of transversal gates. Then, we can achieve the application of these logical gates by only decoding one of the two underlying codes, while remaining protected by the other. This is reminiscent of the scheme for implementing fault-tolerant gates using two concatenated codes~\cite{JL14}, with the added advantage that the stabilizers remain low-weight in the case of the homological product.

The main result is that, while we decode one of the codes, we still remain fully protected by the other code. While errors may potentially propagate during the application of the decoding process, they can still be corrected as long as the number of faults is less than half the distance of the underlying code protecting the information (that is the code that remains encoded at all times). The main Theorem is a variant of Lemma~\ref{lem:distance} stated above (originally Lemma~2 from Ref.~\cite{BrHa14}), yet is proved using the concept of error bands.

Recall the homological product is defined by the complex~$(\mathcal{C}_1 \otimes \mathcal{C}_2, \partial)$, where $\mathcal{C}_i$ are binary spaces. The complex defines a code with~$n_1n_2$~physical qubits, which we label $(i,j)$ where~$1 \le i \le n_1$ and $1 \le j \le n_2$. Then, a Pauli operator~$P$ is supported only on~$(i,j)$ if $\text{Tr}_{(i',j') \ne (i,j)} (P)= P_{i,j} \ne 0$, we will therefore denote such Pauli operators~$P_{i,j}$.  We will call the \emph{error band}~$E_a^1$ to be all possible Pauli operators of the form~$\prod_{j} P_{a,j} $, that is a product of any Pauli operators~$P_{i,j}$ with $i = a$. Conversely, the error band~$E_a^2$ will be all possible Pauli operators of the form~$\prod_j P_{j,a}$. Moreover, we will say that a Pauli error~$E$ is supported on the set~$\{ e_1, \cdots e_l \}$ of error bands~$E_a^1$ if the operator is supported by the qubits composing those error bands, that is~$E \subseteq E_{e_1}^1 \cup \cdots \cup  E_{e_l}^1$. A Pauli operator supported on a set of error bands~$E_a^2$ is defined similarly and whether the operator is supported on error bands~$E_a^1$ or~$E_a^2$ should be clear from context. Note that the unitary $U_1$ will only couple qubits within a fixed error bands~$E_a^1$, while conversely $U_2$ will only couple qubits within fixed error bands~$E_a^2$.

\begin{thm}
\label{thm:errorbands}
Let $(\mathcal{C}_1,\delta_1)$, $(\mathcal{C}_2,\delta_2)$ be complexes and let $(\mathcal{C}_1 \otimes \mathcal{C}_2, \partial)$ be the homological product code constructed from these codes with $\partial = \delta_1 \otimes \mathbb{1} + \mathbb{1} \otimes \delta_2$. Then any error~$E$ supported on fewer than $d_2$ error bands~$E_a^1$ cannot support a logical operator. Similarly, any error~$E$ supported on fewer than $d_1$ error bands~$E_a^2$ cannot support a logical operator.
\end{thm}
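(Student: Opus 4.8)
The plan is to reduce the statement to the distance bounds of the underlying codes by exploiting the structure of $\ker\partial$ from Lemma~\ref{lem:kernel} together with the CSS splitting into $X$- and $Z$-type parts. It suffices to treat $X$-type errors; the $Z$-type case is symmetric under transposing the roles of rows and columns (and of $\delta_a$ versus $\delta_a^T$). An $X$-type Pauli error corresponds to a binary vector $v \in \mathcal{C}_1 \otimes \mathcal{C}_2$, and it supports a nontrivial logical operator precisely when $v \in \ker\partial$ (it commutes with all $Z$-stabilizers, whose binary representation is given by the columns of $\partial$) but $v \notin \im\partial$ (it is not itself a product of $X$-stabilizers, the rows of $\partial$). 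So I would assume $E$ is such a logical $X$-representative supported on error bands $E_{e_1}^1,\dots,E_{e_m}^1$ with $m < d_2$, and derive a contradiction.

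First I would use Lemma~\ref{lem:kernel}: $\ker\partial = \ker\delta_1 \otimes \ker\delta_2 + \im\partial$. Since logical equivalence only matters modulo $\im\partial$, I may replace $E$ by an equivalent representative lying in $\ker\delta_1 \otimes \ker\delta_2$. The key point to check is that this replacement can be done \emph{without increasing the set of occupied error bands} $E_a^1$ — i.e., the representative in $\ker\delta_1\otimes\ker\delta_2$ is still supported on the same $m$ rows $\{e_1,\dots,e_m\}$. I would argue this by noting that $\im\partial = \im(\delta_1\otimes\mathbb{1}) + \im(\mathbb{1}\otimes\delta_2)$; the second summand acts within each fixed row $i$ (it never moves support between rows), so adding elements of $\im(\mathbb{1}\otimes\delta_2)$ does not change which rows are occupied, and one can use a Künneth/row-by-row argument to push each row's content into $\ker\delta_2$. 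Handling the $\delta_1\otimes\mathbb{1}$ part is the more delicate bookkeeping: I would decompose $\mathcal{C}_1 = \ker\delta_1 \oplus (\text{complement})$ appropriately so that the row index, viewed through this splitting, lets us project onto a representative whose $\mathcal{C}_1$-component lies in $\ker\delta_1$ while only ever taking combinations of the already-occupied rows.

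Once $E$ is written as (a representative in) $\ker\delta_1 \otimes \ker\delta_2$ supported on at most $m < d_2$ rows, I would expand it in a basis: $E = \sum_a e^{(a)} \otimes f^{(a)}$ where each $e^{(a)} \in \ker\delta_1$ is supported only within $\{e_1,\dots,e_m\}$ and each $f^{(a)} \in \ker\delta_2$. Grouping by the $\mathcal{C}_1$-component, there is some fixed nonzero vector $e \in \ker\delta_1$ (supported on those $\le m$ rows) paired with a nonzero $f \in \ker\delta_2$, and $f \notin \im\delta_2$ because otherwise $e\otimes f \in \im\partial$ and $E$ would be trivial in that component. But $\ker\delta_2 \setminus \im\delta_2$ consists exactly of nontrivial logical $X$-operators of code $\mathcal{C}_2$, which have weight at least $d_2$ — so $f$ is supported on at least $d_2$ columns, hence $e \otimes f$, and therefore $E$, occupies at least $d_2$ rows, contradicting $m < d_2$. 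Replacing $E_a^1$ by $E_a^2$, $\delta_1\leftrightarrow\delta_2$, and $d_2 \leftrightarrow d_1$ throughout gives the second claim.

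The main obstacle I anticipate is the step that exchanges the error representative for one in $\ker\delta_1\otimes\ker\delta_2$ \emph{while controlling the error bands}: the Künneth-type identity in Lemma~\ref{lem:kernel} guarantees such a representative exists, but it is not a priori clear it can be chosen with the same (or a smaller) set of occupied rows, and making that precise requires carefully choosing complements to $\ker\delta_a$ inside $\mathcal{C}_a$ and tracking how $\im(\delta_1\otimes\mathbb{1})$ mixes rows. The remaining ingredients — the CSS kernel/image characterization of logical operators, and the identification of minimum-weight coset representatives with the code distance — are standard.
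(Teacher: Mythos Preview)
Your approach differs genuinely from the paper's. The paper never invokes the K\"unneth decomposition of Lemma~\ref{lem:kernel}; instead it conjugates the error by the decoding unitary~$U_1^\dagger$, whose binary representation is $W_1^{-1}\otimes\mathbb{1}$. This is an \emph{invertible} linear map that (i)~acts only on one tensor factor and therefore preserves the set of occupied error bands exactly, and (ii)~carries the code with boundary~$\partial$ to the partially decoded code $\partial_{1,0}=\delta_{1,0}\otimes\mathbb{1}+\mathbb{1}\otimes\delta_2$, which is transparently $k_1$ disjoint copies of~$\mathcal{C}_2$ together with ancilla. The restriction of the conjugated error to any single $\mathcal{C}_2$ block then has weight at most the number of occupied bands, and the distance of~$\mathcal{C}_2$ finishes the argument. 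The whole ``choose a good representative in $\ker\delta_1\otimes\ker\delta_2$ without enlarging the band support'' problem never arises, because the paper works with a bijection rather than a quotient.

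Your route has two concrete gaps. First, the identity you invoke, $\im\partial=\im(\delta_1\otimes\mathbb{1})+\im(\mathbb{1}\otimes\delta_2)$, is false. One has $\im(\delta_1\otimes\mathbb{1})=\im\delta_1\otimes\mathcal{C}_2$ and $\im(\mathbb{1}\otimes\delta_2)=\mathcal{C}_1\otimes\im\delta_2$, whose intersection is $\im\delta_1\otimes\im\delta_2$; a dimension count gives the sum dimension $l_1n_2+n_1l_2-l_1l_2$, whereas $\mathrm{rank}(\partial)=(n_1n_2-k_1k_2)/2=l_1n_2+n_1l_2-2l_1l_2$, strictly smaller by $l_1l_2$ whenever both codes have stabilizers. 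So adding an arbitrary element of $\im(\mathbb{1}\otimes\delta_2)$ to~$E$ can change its logical class, and the ``row-by-row'' push into $\ker\delta_2$ is not a legal move within a coset of~$\im\partial$. You correctly flagged this as the main obstacle, but the proposed resolution rests on a wrong identity.

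Second, your closing contradiction is misdirected. From $f\in\ker\delta_2\setminus\im\delta_2$ you obtain $\mathrm{wt}(f)\ge d_2$, but the support of~$f$ controls which \emph{second}-index values the tensor $e\otimes f$ occupies; the first-index (row) support of $e\otimes f$ is governed by $\mathrm{wt}(e)$, about which you only know $e\in\ker\delta_1$. Hence the inequality you derive says nothing about the number of occupied $E_a^1$-bands and no contradiction with $m<d_2$ follows. The paper's unitary-conjugation argument is precisely what makes the band count and the relevant code distance line up automatically, without any representative-chasing.
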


\begin{proof}
Consider an error~$E$ supported on fewer than~$d_2$ error bands~$E_a^1$, let the affected bands be denoted by the set~$\{e_1, \cdots, e_l \}$, that is~$E \subseteq E_{e_1}^1 \cup \cdots \cup  E_{e_l}^1$, where~$l<d_2$. Since any error can be expressed in the Pauli basis, if we show any Pauli error supported on the above set cannot support a logical operator, $E$ cannot support a logical operator. As such, without loss of generality, suppose $E$ is a Pauli error. Consider the modified error~$E' = U_{1}^{\dagger} E U_{1}$, where $U_{1}$ is the encoding unitary for the code~$\mathcal{C}_1$. Then, if we can show that $E'$ cannot support a logical operator on the new codespace after applying~$U_{1}^{\dagger}$, then by unitary equivalence it cannot on the homological product codespace. 

Any logical operator must commute with all of the stabilizers of the code. The modified codespace is given by the boundary operator~$\partial_{1,0} = (W_1^{-1} \otimes \mathbb{1}) \partial (W_1 \otimes \mathbb{1}) = \delta_{1,0} \otimes \mathbb{1} + \mathbb{1} \otimes \delta_2$, that is it will correspond to $k_1$ logical codeblocks encoded in the code~$( \mathcal{C}_2, \delta_2)$ along with accompanying encoded ancilla state\footnote{An encoded ancillary state is a fixed state that contains no non-trivial logical information, yet still may be partially encoded.}. This can be viewed visually by considering the initial unencoded state in Fig.~\ref{fig:init} followed by the encoding operation of Fig.~\ref{fig:Uenc2}. Therefore, in order for $E'$~to support a logical error, it will have to support a logical operator on one of the first~$k_1$ error bands~$E_a^2$. Without loss of generality, consider the first error band~$E_1^2$, that is the first row of qubits in Figs.~\ref{fig:init}-\ref{fig:Uenc}, and the support of~$E'$ on that band, $E_1' = E'\cap E_1^2 \subseteq (E_{e_1}^1 \cup \cdots \cup  E_{e_l}^1) \cap E_1^2$. Therefore, the weight of the Pauli operator given by~$E_1'$ is limited by the number of initial error bands~$E_a^1$ on which the error was supported, that is: $\text{wt}(E_1') \le l < d_2$, and as such since any logical operator supported on the band~$E_1^2$ must be of weight at least~$d_2$, the error~$E_1'$ cannot support a logical error. Since this will be true for all encoded logical bands supported on~$E_a^2$, with~$a\le k_1$, the error~$E'$ cannot support a non-trivial logical error. 

To conclude, since $E'$ cannot support a logical error on the code specified by the boundary operator~$\partial_{1,0}$, then $ E = U_{1}E'U_{1}^{\dagger}$ cannot support a logical operator on the code specified by~$\partial = (W_1 \otimes \mathbb{1}) \partial_{1,0} (W_1^{-1} \otimes \mathbb{1})$. 
\end{proof}

Equipped with Theorem~\ref{thm:errorbands}, we propose the following scheme to implement a fault-tolerant logical gate. Suppose the logical gate~$G_1$ can be implemented transversally on the single-qubit partition of the code~induced by the complex~$(\mathcal{C}_1,\delta_1)$, that is it can be implemented by applying gates that are each individually supported on single qubits of the code. Then, in order to apply the logical gate~$G_1$ on the logical state of the homological code~$(\mathcal{C}_1\otimes \mathcal{C}_2,\partial)$, we begin by unencoding the code~$(\mathcal{C}_2,\delta_2)$, that is we apply the unitary~$\mathbb{1} \otimes U_{2}^{\dagger}$. At this point, the first~$k_2$ blocks of~$n_1$ qubits remain encoded in the code~$(\mathcal{C}_1,\delta_1)$, while the the remaining blocks are in an encoded ancillary state. Therefore, we can apply the transversal implementation of the logical gate~$G_1$ on any of the $k_2$ logical states we desire. We complete the logical gate application by then reencoding into the code~$(\mathcal{C}_1\otimes \mathcal{C}_2,\partial)$ by applying~$\mathbb{1} \otimes U_2$.

The proposed scheme is fault-tolerant in that, it will be able to correct against up to~$\lfloor (d_1-1)/2 \rfloor$ faults throughout the process. Any error $P_{a,b}$ that occurs during the application of either~$\mathbb{1} \otimes U_2^{-1}$, $\mathbb{1} \otimes U_2$, or the transversal gate can spread to a high-weight error, yet such an error will remain within a single error band~$E_a^2$. This follows as the application of $\mathbb{1} \otimes U_2$ only ever couples qubits within the same error band~$E_a^2$. Similarly, a transversal gate with respect to the code~$(\mathcal{C}_1,\delta_1)$ will not couple different error bands~$E_a^2$, by definition. Therefore, any single fault~$P_{a,b}$ can result in an error that is contained within the error band~$E_{a}^2$. Since any logical error must be supported on at least $d_1$ such error bands, any error affecting less than half of such error bands must remain correctible by the Knill-Laflamme condition~\cite{KL97}.

By symmetry, given a transversal gate~$G_2$ on the single-qubit partition of the code~$(\mathcal{C}_2,\delta_2)$, a fault-tolerant implementation of~$G_2$ can be achieve by applying~$U_{1}^{\dagger} \otimes \mathbb{1}$, followed by the transversal gate, and a reencoding~$U_{1} \otimes \mathbb{1}$. Such a fault-tolerant gate will be able to correct against up to~$\lfloor (d_2-1)/2 \rfloor$ faults.

\subsection{Correcting errors}

In the last section, we showed how even when decoding one of the two codes, we can always protect against at least~$\lfloor (d_i-1)/2 \rfloor$~faults. However, the encoding/decoding operations may generally have~$\mathcal{O}(n_i^c)$ time steps\footnote{Decoding of stabilizer codes amounts to Gaussian elimination on the binary-symplectic matrices representing the stabilizers. This can be done using a polynomial number (in the input size) of elementary matrix row manipulations which correspond to Clifford gates.}, and as such, errors will accumulate within a given error band (assuming an independent non-Markovian error processes), resulting in an error with probability: $p_e\mathcal{O}(n_i^c)$, where $p_e$~is the physical error rate. This is undesirable from the perspective of fault tolerance, as we hope that for a given family of codes, by growing the distance, the probability of incurring a logical error decreases exponentially below some threshold value. Yet, if the underlying error rate is growing polynomially with the distance, this yields a pseudo-threshold for each code, rather than a global threshold for a code family. 

In this section, we present a simple decoding algorithm for the homological product code, based on the decoding algorithm of the individual codes composing the homological product. While the presented scheme will likely be far from ideal in many settings, it will serve as a proof of principle decoder as well as provide a means to correct against errors during the implementation of the fault-tolerant logical gates. This will help alleviate the concern errors accumulating within an error band due to the encoding/decoding having a macroscopic number of individual time steps. As will be discussed at the end of this subsection, in some cases this will guarantee a fault-tolerance threshold against independent noise. However, the existence and value of such a threshold would have to be studied on a case-by-case basis. 

Consider the homological product code as specified in the previous subsection, with a boundary operator~$\partial = \delta_1 \otimes \mathbb{1} + \mathbb{1} \otimes \delta_2$. Moreover, suppose we have recovery operators~$\mathcal{R}_i$ for each code that returns the code to the codespace and corrects with certainty when the weight of the error is below half the distance of the respective code. We present the following Corollary to Theorem~\ref{thm:errorbands}, which follows directly from the proof of that result.

\begin{cor}[of Theorem~\ref{thm:errorbands}]
\label{cor:errorbands}
Let $(\mathcal{C}_1,\delta_1)$, $(\mathcal{C}_2,\delta_2)$ be complexes and let $(\mathcal{C}_1 \otimes \mathcal{C}_2, \partial)$ be the homological product code constructed from these codes with $\partial = \delta_1 \otimes \mathbb{1} + \mathbb{1} \otimes \delta_2$. Then, for any error~$E$ supported on fewer than $\lfloor (d_j-1)/2 \rfloor$ error bands~$E_a^i$, the error~$E'= U_{i}^\dagger E U_{i}$ is correctible using the recovery operator~$\mathcal{R}_j$, where $i,\ j \in \{1,2 \}$ such that~$i \ne j$.
\end{cor}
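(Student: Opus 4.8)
The plan is to reread the proof of \thrm{errorbands} and notice that the error which survives the partial decoding is not merely unable to support a logical operator, but in fact has weight below the correction radius of $(\mathcal{C}_j,\delta_j)$ on each of the codeblocks into which the decoded code splits; applying $\mathcal{R}_j$ block by block then finishes. Fix $i\neq j$ and let $E$ be supported on the error bands $E_{e_1}^i,\dots,E_{e_l}^i$ with $l<\lfloor(d_j-1)/2\rfloor$. Since a recovery map that corrects every Pauli appearing in the expansion of $E$ corrects $E$, we may assume $E$ is a Pauli operator, and we pass to the decoded frame by setting $E'=U_i^{\dagger}EU_i$. In that frame the code is the one obtained from $\partial$ by replacing the boundary operator of the decoded factor by its canonical form, i.e.\ the code $\partial_{i,0}$ of \sec{HomologicalPC}; as recalled there and written out in \eq{delta0matrix}, it splits into $k_i$ logical codeblocks, each isomorphic to $(\mathcal{C}_j,\delta_j)$ and confined to a single error band $E_b^j$, together with an encoded ancillary state filling the remaining bands.

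The heart of the proof is the weight bound already extracted in the proof of \thrm{errorbands}. Because $U_i$ couples qubits only within a common error band $E_a^i$, the decoded error satisfies $E'\subseteq E_{e_1}^i\cup\dots\cup E_{e_l}^i$; intersecting this set with the band $E_b^j$ carrying any fixed codeblock, and using that two error bands of opposite type share at most one qubit, the restriction $E'\cap E_b^j$ is a Pauli of weight at most $l<\lfloor(d_j-1)/2\rfloor$ --- the analogue of the estimate $\text{wt}(E_1')\le l< d_2$ there, now guaranteed below the correction radius of $(\mathcal{C}_j,\delta_j)$. Hence $\mathcal{R}_j$, applied on each codeblock, returns that block to the $\mathcal{C}_j$ codespace and undoes the error there with certainty by its assumed property; since the codeblock bands partition the physical qubits and $E'$ is confined to them, the product of these blockwise recoveries corrects $E'$, and conjugating back by $U_i$ corrects $E$.

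The only point that needs care beyond transcribing \thrm{errorbands} --- and the step I would flag as the main obstacle --- is the behaviour of $\mathcal{R}_j$ on the bands carrying the encoded ancillary state. There the code within a band is still $(\mathcal{C}_j,\delta_j)$ but additionally correlated, in pairs, with partner bands, so running $\mathcal{R}_j$ on such a band (which sees only the intra-band $\mathcal{C}_j$ stabilizers) can leave a residual Pauli; one must observe that this residual is supported in that single band and has weight below $\lfloor(d_j-1)/2\rfloor$, so that by \thrm{errorbands} it cannot grow into a logical fault once $U_i$ re-encodes. This is why the statement asserts correctibility ``using $\mathcal{R}_j$'' rather than exact recovery into $\ker\partial_{i,0}$, and it is the only genuinely new bookkeeping; everything else is a direct reading of the proof above.
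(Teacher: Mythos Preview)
Your proposal is correct and takes the same approach the paper does: the paper gives no separate proof, merely remarking that the corollary ``follows directly from the proof of that result,'' and your argument---rerunning the proof of \thrm{errorbands} with the tighter hypothesis $l<\lfloor(d_j-1)/2\rfloor$ so that the per-band weight estimate $\text{wt}(E'\cap E_b^j)\le l$ now lands inside the correction radius of $\mathcal{R}_j$---is exactly that reading. Your third paragraph on the ancillary bands is more careful than the paper's own treatment (which, in its subsequent use of the corollary, only extracts $\mathcal{C}_j$ syndromes on the first $k_i$ logical bands and tacitly leaves any residual to the fault-tolerance guarantee of \thrm{errorbands}); the concern you flag is genuine and your resolution is sound.
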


The above Corollary states that given a correctible error, as stated by Theorem~\ref{thm:errorbands}, conjugating that error by either decoding operator $U_{i}^{\dagger}$ will result in a correctible error on the remaining encoded states in the code~$\mathcal{C}_j$. 

We propose the following decoding algorithm. Given an encoded state in the traditional homological product code, we measure the syndromes of the code as specified by the row ($X$~type) and columns ($Z$~type) of the boundary operator~$\partial$. Now, given these measurement outcomes, we can map them onto syndrome outcomes for either of the two code, using the following procedure. Without loss of generality, suppose we would like to map them onto the syndromes of the code~$\mathcal{C}_2$. We know the modified boundary operator $\partial_{1,0} = (W_1^{-1} \otimes \mathbb{1}) \partial (W_1 \otimes \mathbb{1}) = \delta_{1,0} \otimes \mathbb{1} + \mathbb{1} \otimes \delta_2$ corresponds to $k_1$~logical states that are encoded in the code~$\mathcal{C}_2$. Specifically, the first $k_1n_2$ rows and columns of~$\partial_{1,0}$ will correspond to the stabilizers of the code~$\mathcal{C}_2$, see Eq.~\ref{eq:delta0matrix} for an example, replacing $\delta_{2,0}$ with~$\delta_2$. Suppose we measured a given stabilizer~$S_l$ of the original stabilizer code such that~$ES_l = (-1)^{b_l} S_lE$, that is $b_l \in \{0,1 \}$ records the stabilizer measurement outcome. Then since the encoding circuit is Clifford, and $S$~is Pauli, we can efficiently classically compute the form of the transformed syndrome~$S_l' = U_{1}^\dagger S_l U_{1}$. Moreover, $S_l'$ will keep the same commutation relation with the transformed error~$E'$, that is $E'S_l' = (-1)^{b_l} U_{1}^\dagger S_l E U_{1} = (-1)^{b_l} S_l'E'$. Therefore, we can use the transformed stabilizers $S_l'$ to determine the syndrome of~$E'$ in the code~$\mathcal{C}_2$. To find the appropriate recovery Pauli operator~$Q'$ we use the known decoder of~$\mathcal{C}_2$, and transform~$Q'$ back into a recovery operator for the original code by classically computing~$Q = U_{1} Q' U_{1}^{\dagger}$, which is again efficient since $U_{1}$ is a Clifford circuit.

We can then generalize the above method to address a build up of errors throughout the fault-tolerant process presented in Sec.~\ref{sec:partial}. Suppose, without loss of generality, we want to implement the logical gate~$G_2$ which is transversal for the code~$\mathcal{C}_2$. Then, we would start with unencoding~$\mathcal{C}_1$ by applying~$U_{1}^{\dagger} = V_1^\dagger \cdots V_t^{\dagger}$, where $V_i$ are the CNOT gates used in constructing the encoding unitary~$U_{1}$. After the application of each~$V_i^{\dagger}$, the code will be partially unencoded and the resulting boundary operator will be of the form~$\partial_{1,i} = \delta_{1,i} \otimes \mathbb{1} + \mathbb{1} \otimes \delta_2$, where $\delta_{1,i} = (V_i \cdots V_1) \delta_{1,0} (V_1^{\dagger} \cdots V_i^{\dagger})$\footnote{We have used an abuse of notation to denote $V_i$ as both the unitary CNOT gate and the binary representative for the CNOT. We feel it is clear from the context which operator we are referring to.}. If we assume that throughout the application of each~$V_i$ operator the generators of the code remain sparse, then we can measure these generators after each application~$V_i$ in order to address the errors that occurred during the application of that gate. The errors are then corrected by classically mapping the stabilizer generators~$\partial_{1,i}$ onto those of~$\partial_{1,0}$, and using, as outlined above, the decoder of~$\mathcal{C}_2$ to correct for the resulting errors.

A final remark on the stabilizer generators of~$\partial_{1,i}$. As stated above, if we were to measure them after every application of~$V_i, \ V_i^{\dagger}$, we would like them to remain sparse. In general, while the initial and final boundary operators~$\partial_{1,0}$ and~$\partial$ are certainly sparse, there will be no guarantee that the intermediary matrices remain sparse as well. However, for many common codes, such as topological codes, this can be achieved by choosing an appropriate encoding unitary. Roughly speaking, the idea is to build up the non-local logical operators of a topological code by growing the code at its boundary in a systematic manner~\cite{BSKFV11}. Moreover, if the stabilizers remain sparse at every time step, and the distance of each of the respective codes scale as~$n_i^{\alpha_i}$ for some positive power of~$\alpha_i$, then we can invoke the result Ref.~\cite{KP13} to prove the existence of a fault tolerance threshold. Moreover, as outlined in Ref.~\cite{KP13}, if the syndrome measurements are repeated then we may overcome measurement errors as well and still have a finite error threshold. Finally, repeated measurements may even be avoided if the underlying codes have single-shot correction properties~\cite{Bombin15b,Campbell18}. An explicit construction of a decoder applied to this setting remains an interesting open problem.

\section{Universal constructions}
\label{sec:Universal}

In this section, we explore an explicit example of a family of codes for implementing a universal set of fault-tolerant operations using the construction from Section~\ref{sec:partial}. We will focus on the Clifford $+ \ T$ universal gate set~\cite{BBC+95}. Let the code~$\mathcal{C}_1$ be the 2D~color code with distance~$d_1$ encoding a single logical qubit. The code has parameters~$[[ c_1 d_1^2, 1, d_1 ]]$, for a constant~$c_1$, and can implement any Clifford gate transversally~\cite{BM07, KB15}. The code~$\mathcal{C}_2$ will be composed of the gauge-fixed 3D~color code. That is, a particular choice of the 3D~color code where volume cell terms are of both $X$ and $Z$~type, while the face terms are only of $Z$~type. The resulting code has a transversal implementation of a~$T = \text{diag}(e^{-i \pi/8}, e^{i \pi/8})$, a non-Clifford gate and code parameters~$[[c_2 d_2^3, 1, d_2]]$, for a constant~$c_2$~\cite{Bombin15a,KB15}. However, due to the asymmetry in the number of $X$ and $Z$~stabilizer generators, arising from having to fix the face terms to be $Z$-type, the resulting code cannot be directly used in the single-sector homological product code construction. We present two alternative code constructions of~$\mathcal{C}_2$, one based on code padding, and one using two complementary copies and re-encoding them in the $[[4,2,2]]$ repetition code. 

\subsection{Code padding}
Suppose we have a CSS code~$\mathcal{C}$ that we want to use in a universal fault-tolerant implementation of a homological product code, and moreover assume without loss of generality there are more $Z$~generators than $X$~generators. In order to use the code~$\mathcal{C}$ in a homological product code construction, as presented, we must have the same number of independent $X$ and~$Z$ stabilizers. A rather simple alternative code that we can use is to pad the original code with extra ancillary qubits in the~$\ket{+}$ state, thus adding an extra set of single-qubit $X$~generators. The resulting code will have the same distance as the original code, where all non-trivial logical Pauli operators can be supported on the original qubits of the code. For example, in the case of the smallest gauge-fixed 3D color code, the 15-qubit Reed-Muller code, we can pad the code with an extra 6 $\ket{+}$~qubits, resulting in a~$[[21,1,3]]$ code. While this extra padding of qubits does not change the base code other than trivially alternating the number of underlying stabilizer generators, these generators will play a role in the homological product, via the initial entanglement present in the unencoded state, represented by~$\partial_0$.

Therefore, the universal scheme for implementing a set of fault-tolerant logical operations that can correct an arbitrary single qubit error will use the Steane and padded Reed-Muller codes, which are the smallest distance~3 versions of the 2D and padded 3D~color codes, respectively. The overall scheme will have coding parameters~$[[147, 1, 3^*]]$, where $3^*$ corresponds to the minimum fault-tolerant distance of the overall scheme, not necessarily the distance of the homological code itself. The stabilizer measurements will be of weight at most~15, see Appendix~\ref{app:ExBoundaryOps}. This is a large improvement over requiring measuring stabilizers of weight~28 in the concatenated scheme~\cite{JL14}, which leads to punitive threshold values and qubit overheads~\cite{CJL16,CJL17,CJ17}.

\subsection{Code doubling}

The process of code doubling was first presented in Refs.~\cite{Kitaev06, BTL10} for converting between Majorana fermion codes and stabilizer codes. We will outline the general logical procedure for any CSS code here, yet it can be generalized for arbitrary stabilizer code rather similarly.

Consider a CSS code~$\mathcal{A}^{(1)}$ of $n$~physical qubits with stabilizer generators~$S_{X_i}^{(1)} = \prod_{j \in \mathcal{X}_i} X_j^{(1)}$, where~$\mathcal{X}_i$ is a list of qubits in the support of~$S_{X_i}^{(1)}$ and the use of the superscript~$(1)$ will become clear shortly. Similarly, the $Z$~stabilizers are given by~$S_{Z_i}^{(1)} = \prod_{j \in \mathcal{Z}_i} Z_j^{(1)}$. Consider now a rotated version of~$\mathcal{A}^{(1)}$ where each of the $X$~stabilizers are replaced by $Z$~stabilizers and vice-versa, call this code~$\mathcal{A}^{(2)}$. More explicitly, the $X$ and $Z$~stabilizers of~$\mathcal{A}^{(2)}$ are given by: $S_{X_i}^{(2)} = \prod_{j \in \mathcal{Z}_i} X_j^{(2)}$,  $S_{Z_i}^{(2)} = \prod_{j \in \mathcal{X}_i} Z_j^{(2)}$. Therefore, the different superscripts represent different blocks of $n$~qubits. 

\begin{figure}
\centering
$
\Qcircuit @C=1em @R=.7em {
\lstick{\ket{\psi_1}} 	& \ctrl{2} & \qw & \qw & \targ & \qw\\
\lstick{\ket{\psi_2}} 	& \qw & \targ & \ctrl{1} & \qw & \qw \\
\lstick{\ket{0}}		& \targ & \qw & \targ & \qw & \qw \\
\lstick{\ket{+}}		& \qw & \ctrl{-2} & \qw & \ctrl{-3} & \qw
}
$
\vspace{2em}
\caption{Encoding circuit for the 4-qubit repetition code. The stabilizers generators of the code are: $X^{\otimes 4}, \ Z^{\otimes 4}$.}
\label{fig:4qubitEnc}
\end{figure}
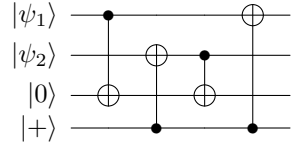

These two codes are then re-encoded into the $[[4,2,2]]$~repetition code, whose encoding circuit is given in Fig.~\ref{fig:4qubitEnc}. The third block of qubits will be initially prepared as~$\ket{0}^{\otimes n}$, while the fourth block will be prepared as~$\ket{+}^{\otimes n}$. Consider how the stabilizers are transformed under the action of the circuit in Fig.~\ref{fig:4qubitEnc}: 
\begin{align}
S_{X_i}^{(1)} = \prod_{j \in \mathcal{X}_i} X_j^{(1)} &\longrightarrow \prod_{j \in \mathcal{X}_i} X_j^{(1)} X_j^{(3)} \\
S_{X_i}^{(2)} = \prod_{j \in \mathcal{Z}_i} X_j^{(2)} &\longrightarrow \prod_{j \in \mathcal{Z}_i} X_j^{(2)} X_j^{(3)} \\
X_j^{(4)} &\longrightarrow X_j^{(1)}X_j^{(2)}X_j^{(3)}X_j^{(4)} \\
S_{Z_i}^{(1)} = \prod_{j \in \mathcal{Z}_i} Z_j^{(1)} &\longrightarrow \prod_{j \in \mathcal{Z}_i} Z_j^{(1)} Z_j^{(4)} \\
S_{Z_i}^{(2)} = \prod_{j \in \mathcal{X}_i} Z_j^{(2)} &\longrightarrow \prod_{j \in \mathcal{X}_i} Z_j^{(2)} Z_j^{(4)} \\
Z_j^{(3)} &\longrightarrow Z_j^{(1)}Z_j^{(2)}Z_j^{(3)}Z_j^{(4)}. 
\end{align}
Note that we can combine the mapped stabilizers with those of the repetition code in order to recognize a complete symmetry between the $X$ and $Z$~stabilizers of the code. While it may immediately follow from the presented construction, one can show that the above construction is equivalent to concatenating the $[[4,2,2]]$ code with the code~$\mathcal{A}^{(1)}$ and its rotated compliment~$\mathcal{A}^{(2)}$. Moreover, the distance of the new code will be twice that of the original code. Therefore, this concatenated code provides a code that can be used in the homological product code construction. 

Choosing the code~$\mathcal{A}^{(1)}$ to be the gauge-fixed 3D~color code along with its rotated compliment~$\mathcal{A}^{(2)}$, we can use these codes in conjunction with the 2D color code for the purposes of universal fault-tolerant computation via homological product codes. To perform any logical Clifford gate, it will be sufficient to decode the $[[4,2,2]]$ repetition code, followed by the decoding of~$\mathcal{A}^{(i)}$, for either or both $i \in \{ 1,2 \}$, depending on which codeblock one would like to apply the desired Clifford gate transversally.

In order to implement the $T$~gate fault-tolerantly, one would first decode the 2D~color code, as specified in Sec.~\ref{sec:partial}. At this point, one could not directly apply the non-Clifford gate transversally, as the two encoded codeblocks will still be further encoded in the $[[4,2,2]]$~code. However, one can then decode the~$[[4,2,2]]$ code by applying the circuit of Fig.~\ref{fig:4qubitEnc} in reverse. This will preserve the protection guaranteed by Theorem~\ref{thm:errorbands} as each block of 4-qubits will belong to the same error band, allowing the application of a transversal $T$~gate bookended by fault-tolerant operations. 

A final note regarding code doubling: since the stabilizers are symmetrized, the code will gain a transversal Hadamard gate. The logical result of the transversal Hadamard will be to implement logical Hadamard followed by logical SWAP between the two logical qubits. As such, for this particular operation, code doubling allows for a rapid implementation of this logical gate without having to decode one of the codes in the homological product. 

\section{Conclusion}
\label{sec:Conclusion}

This work introduces a method for implementing a set of logical gates using homological product codes, applicable to any set of CSS~codes. Namely, we show that if the underlying codes composing the homological product have complementary classes of transversal gates, then this scheme can be used to implement a fault-tolerant universal gate set. Moreover, if the underlying codes have stabilizer generators that are sparse, the construction will remain sparse, allowing for the implementation of a fault-tolerant gate set that does not require measurement of high-weight operators. This method is particularly interesting for the theory of quantum LDPC codes, where the hope would be to construct codes with good parameters and sets of transversal gates. If one were able to find such codes, this may lead to an alternative method for implementing universal fault-tolerant computation with constant overhead, in a similar manner to recent results that used special ancillary states to obtain a universal set of gates~\cite{Gottesman14,LTZ15,FGL18,FGL18b}. A recent result exploring the connection between homological product codes and single-shot error correction highlights a potential avenue for constructing codes with interesting transversal gates~\cite{Campbell18}, yet new constructions remain elusive.

The presented scheme relies on decoding one of the two codes composing the homological product, applying the transversal gate, and re-encoding. The encoding/decoding process may indeed spread errors in a dramatic way, yet due to the protection of the other code, the global operation remains fault-tolerant. If the encoder/decoder of each code preserves the sparsity of the code after each gate, then modified stabilizers may be measured during the encoding/decoding process, allowing for increased protection. Moreover, this should imply the existence of a finite error probability threshold due to the stabilizer generators always being low weight as long as the distance grows as a positive power of the number of qubits~\cite{KP13}.

\section*{Acknowledgments}
We thank Benjamin J. Brown, and Sam Roberts for insightful discussions on intermediary correction of errors during the fault-tolerant operations. We also thank Christopher Chamberland and John Preskill for comments during the development of this work and Sergey Bravyi for feedback on the initial manuscript. We acknowledge the support from the Walter Burke Institute for Theoretical Physics in the form of the Sherman Fairchild Fellowship as well as support from the Institute for Quantum Information and Matter~(IQIM), an NSF Physics Frontiers Center (NFS Grant PHY-1733907).

\bibliographystyle{unsrtnat}
\bibliography{bibtex_jochym}

\appendix

\section{Examples of boundary operators}
\label{app:ExBoundaryOps}

Boundary operator for the 7-qubit Steane code, each row and column have weight~4:
\begin{align}
\delta_7 =
\begin{pmatrix}
1 & 1 & 1 & 1 & 0 & 0 & 0 \\
1 & 1 & 0 & 0 & 1 & 1 & 0 \\
1 & 0 & 1 & 0 & 1 & 0 & 1 \\
1 & 0 & 0 & 1 & 0 & 1 & 1 \\
0 & 1 & 1 & 0 & 0 & 1 & 1 \\
0 & 1 & 0 & 1 & 1 & 0 & 1 \\
0 & 0 & 1 & 1 & 1 & 1 & 0 
\end{pmatrix}.
\end{align}
Note that for the Steane code, every non-trivial element of the stabilizer group is represented in the rows and columns, this will not hold in general for other codes.

A boundary operator for the padded 15-qubit Reed-Muller code, composed of 21~qubits:
\begin{align}
\delta_{15p} =
\begin{pmatrix}
011010011001011&111000\\
110000110011110&110010\\
101001010101101&101100\\
000011111111000&100100\\
100110010110011&011001\\
001100111100110&010010\\
010101011010101&001001\\
111111110000000&000000\\
100101101001011&000000\\
001111000011110&000010\\
010110100101101&000100\\
111100001111000&000100\\
011001100110011&000001\\
110011001100110&000010\\
101010101010101&000001\\
\hline
000000000000000&000000\\
000000000000000&000000\\
000000000000000&000000\\
000000000000000&000000\\
000000000000000&000000\\
000000000000000&000000
\end{pmatrix}.
\end{align}
We have visually split the matrix into two sets, the first 15~qubits and 6~ancillary qubits. The first 15~qubits are where the logical information is stored, while the extra 6~qubits represent the padded ancilla qubits that are prepared in the $\ket{+}$~state. Note that none of the $Z$~stabilizers, represented by the columns, have support on the last 6~qubits. It is fairly straightforward to check that~$\text{rank}(\delta_{15p}) = 10$. One can recover independent generators for the rows that correspond to the 15-qubit Reed-Muller code $X$~stabilizers on the first 15~qubits, and individual single-qubit $X$~generators on the last 6~qubits. One can also recover the independent 15-qubit Reed-Muller code $Z$~generators by considering the columns as well as a representative of the independent 6~gauge face generators in the last 6~columns, these correspond to fixing the gauge in the $Z$~basis.

The homological product boundary operator~$\partial = \delta_7 + \mathbb{1} + \mathbb{1} \otimes \delta_{15p}$ will have sparsity~15, that is every row and column in~$\partial$ will be of weight at most~$15$. This corresponds to the maximum weight operator one would have to measure for implementing the universal scheme on the homological product of these two codes, an improvement over the universal concatenated model~\cite{JL14} which would require measuring operators of weight~$28$. More importantly, in using higher distance versions of each of the codes, the concatenated model would require measuring operators whose weight will grow linearly with system size, as opposed to that of the homological construction which will remain constant-sized.

\end{document}